\newtheorem{theorem}{Theorem}
\newtheorem{algorithm}[theorem]{Algorithm}
\newtheorem{definition}[theorem]{Definition}
\newcommand{\bi}{\begin{itemize}}
\newcommand{\ei}{\end{itemize}}
\newcommand{\bd}{\begin{displaymath}}
\newcommand{\ed}{\end{displaymath}}
\newcommand{\be}{\begin{eqnarray*}}
\newcommand{\ee}{\end{eqnarray*}}
\title{\LARGE \bf
Information Based Data-Driven Characterization of Stability and Influence in Power Systems}
\author{ Subhrajit Sinha, Pranav Sharma, Venkataramana Ajjarapu and  Umesh Vaidya\\
\thanks{S. Sinha is with Pacific Northwest National Laboratory, Richland, Washington. P. Sharma and V. Ajjarapu are with the Department of Electrical Engineering at Iowa State University, Ames, Iowa. U. Vaidya is with Department of Mechanical Engineering at Clemson University, Clemson, South Carolina. \tt \small {email : subhrajit.sinha@pnnl.gov}
}
%{\tt\small subhrajit.sinha@pnnl.gov}}%
}
\begin{document}
\maketitle

\begin{abstract}
Stability analysis of a power network and its characterization (voltage or angle) is an important problem in the power system community. However, these problems are mostly studied using linearized models and participation factor analysis. In this paper, we provide a purely data-driven technique for small-signal stability classification (voltage or angle stability) and influence characterization for a power network. In particular, we use Koopman operator framework for data-driven discovery of the underlying power system dynamics and then leverage the newly developed concept of information transfer for discovering the causal structure. We further use it to not only identify the influential states (subspaces) in a power network, but also to clearly characterize and classify angle and voltage instabilities. We demonstrate the efficacy of the proposed framework on two different systems, namely the 3-bus system, where we reproduce the already known results regarding the types of instabilities, and the IEEE 9-bus system where we identify the influential generators and also the generator (and its states) which contribute to the system instability, thus identifying the type of instability.
\end{abstract}

\section{Introduction}\label{section_intro}
Power systems are large networks with numerous components and complicated topology and reliable operation of the power grid is of utmost importance in today's world. As such, much research is devoted to ensure reliable and safe operation of power systems. In years gone by power systems were operated far from the operating point of bifurcation or instability, but with advancements in technology and an overall increase in load demand, the power systems are being operated closer and closer to their limit of stable operation\cite{shrestha2004congestion}. Moreover, it is well-known that the efficiency of power systems improves if the system is operated close to its limit of stable operation \cite{winter2015pushing}. This mode of operation warrants stability analysis and identification of states of the system which affect stability. Moreover, it is also of importance to identify the causal structure in power networks so that if some abnormal situation arises one can implement proper control action at the right state(s). For example, if there is a sudden change in the load, it is known that control action has to be taken at the angle variable(s) of the generator(s) \cite{kundur_stability_classification}. Hence, for the reliable operation of power networks, it is important to characterize their influence and causal structure.

Causality analysis has been a topic of research since the days of Aristotle, but there is no universally accepted definition of causality. The most commonly used notion of causality, known as Granger causality \cite{granger1969}, was first proposed for the analysis of econometric data and since has been used in various fields of research. Again, inspired by Shannon's information theory, bi-directional information \cite{Marko} and directed information \cite{IT_kramer_directedit} have been proposed as measures of causality and are used mostly in information theory. Another similar notion of causality is Schreiber's transfer entropy \cite{Schreiber}, but in \cite{sinha_IT_CDC2016} it was shown that in a dynamical system setting all these measures of causality fail to capture the correct causal structure. In \cite{sinha_IT_CDC2016,sinha_IT_ICC}, the authors provided a new definition of causality in terms of information flow between the states of a dynamical system and showed that it does capture the true causal structure of a dynamical system. The information transfer measure, defined in \cite{sinha_IT_CDC2016,sinha_IT_ICC} was used to define influence in a dynamical system and has been used in \cite{IT_influence_acc,sinha_cdc_2017_power,sinha_IT_power_transaction} to characterize influence and stability in power networks.

However, all these studies are model-based and modelling of power networks is difficult. Furthermore, stability analysis and influence characterization are usually performed in terms of participation factors \cite{participation_part1}, where one considers a linearized model of the power network. This often leads to crude approximations and may result in erroneous conclusions. 
Moreover, with the huge amount of data available nowadays, there is a strong necessity for data-driven causality and influence characterization. In this paper, we address these issues and show how data-driven causality and influence characterization can lead to a better understanding of power system operation. In particular, we use Koopman operator theoretic ideas \cite{Lasota,mezic_applied_koopmanism,sinha_equivariant_IFAC,sinha_sparse_koopman_acc,sinha_sparse_koopman_arxiv} to approximate the dynamics of the power network and use the obtained Koopman model to compute the information transfer measure. One of the main advantages of the Koopman model is the fact that it is \emph{not} a linearized model and it accounts for the nonlinearities of the underlying system. Furthermore, the Koopman model can be computed easily from time-series data and we leverage this to compute the information transfer between the various states (subspaces) of the power network, which is then used for stability classification and influence characterization in the power grid. We demonstrate the efficacy of the proposed approach on two different power networks, namely, stability characterization of the well-studied 3-bus system and the IEEE 9-bus system where we characterize both influence and stability.

The paper is organized as follows. In section \ref{section_IT} we review the concept of information transfer in a dynamical system and in section \ref{section_data_IT} we discuss the algorithm for data-driven computation of information transfer. In section \ref{section_3_bus} we characterize the small-signal stability of the IEEE 3 bus system followed by influence characterization and stability analysis of IEEE 9-bus system in section \ref{section_9_bus}.  Finally we conclude the paper in section \ref{section_conclusion}.

\section{Information Transfer}\label{section_IT}
In this section we briefly review the concept of information transfer in a dynamical system. For details see \cite{sinha_IT_CDC2016,sinha_IT_ICC}. Consider a discrete time dynamical system
{\small
\begin{eqnarray}\label{system2d}\left.
\begin{array}{ccl}
x(t+1) = F_x(x(t),y(t))+\xi_x(t)\\
y(t+1) = F_y(x(t),y(t))+\xi_y(t)
\end{array}\right\}=F(z(t),\xi(t))
\end{eqnarray}
}
where $x\in\mathbb{R}^{|x|}$, $y\in\mathbb{R}^{|y|}$ (here $|\cdot|$ denotes the dimension of $\{\cdot\}$), $z=(x^\top,y^\top)^\top$, and  $F_x : \mathbb{R}^{|x|+|y|}\to\mathbb{R}^{|x|}$ and $F_y : \mathbb{R}^{|x|+|y|}\to\mathbb{R}^{|y|}$ are assumed to be continuously differentiable and $\xi(t)$ are assumed to be i.i.d. noise. Let $\rho(z(t))$ denote the probability density of $z(t)$ at time $t$. In this paper, by information we mean the Shannon entropy of any probability density $\rho(\cdot)$. The intuition behind the definition of information transfer from state (subspace) $x$ to state (subspace) $y$ is that the total entropy of $y$ is the sum of the information transferred to $y$ from $x$ and the entropy of $y$ when $x$ is forcefully not allowed to evolve and is held constant (frozen). For this consider the modified system
\begin{eqnarray}\label{system_xfreeze}\left.
\begin{array}{ccl}
x(t+1) &=& x(t)\\ 
y(t+1) &=& F_y(x(t),y(t)) + \xi_y(t) 
\end{array}\right\}=F_{\not x}(z(t))
\end{eqnarray}
We denote by $\rho_{\not x}(y(t+1)|y(t))$ the probability density function of $y(t+1)$ conditioned on $y(t)$, with the dynamics in $x$ coordinate frozen in time going from time step $t$ to $t+1$ as in Eq. (\ref{system_xfreeze}). We have following definition of information transfer from $x\to y$ going from time step $t$ to $t+1$. 
\begin{definition}\label{IT_def}[Information transfer] \cite{sinha_IT_CDC2016,sinha_IT_ICC} The information transfer from $x$ to $y$ for the dynamical system (\ref{system2d}), as the system evolves from time $t$ to time $t+1$ (denoted by $[T_{x\to y}]_t^{t+1}$), is given by following formula
\begin{small}
\begin{eqnarray}\label{IT_formula}
[T_{x\to y}]_t^{t+1}=H(\rho(y(t+1)|y(t)))-H(\rho_{\not{x}}(y(t+1)|y(t))\label{IT}
\end{eqnarray}
\end{small}
where $H(\rho(y))=- \int_{\mathbb{R}^{|y|}} \rho(y)\log \rho(y)dy$ is the entropy of probability density function $\rho(y)$ and $H(\rho_{\not{x}}(y(t+1)|y(t))$ is the entropy of $y(t+1)$, conditioned on $y(t)$, where $x$ has been frozen as in Eq. (\ref{system_xfreeze}). 
\end{definition}

The information transfer from $x$ to $y$ depicts how the evolution of $x$ affects the evolution of $y$, that is, it gives a quantitative measurement of the influence of $x$ on $y$. With this, we have the following definition of influence in a dynamical system.

\begin{definition}[Influence]
We say that $x$ causes $y$ or $x$ influence $y$ if and only if the information transfer from $x$ to $y$ is non-zero.
\end{definition}

\subsection{Information transfer in linear dynamical systems}
Consider the following stochastic perturbed linear dynamical system 
\begin{eqnarray}
z(t+1)=Az(t)+\sigma \xi(t)\label{lti}
\end{eqnarray}
where $z(t)\in \mathbb{R}^N$, $\xi(t)$ is vector valued Gaussian random variable with zero mean and unit variance and $\sigma>0$ is a constant. We assume that the initial conditions are Gaussian distributed with covariance $\Sigma(0)$.
Since the system is linear, the distribution of the system states for all future time will remain Gaussian with covariance $\Sigma(t)$ satisfying 
\[A \Sigma(t-1)A^\top+\sigma^2 I=\Sigma(t).\]
To define the information transfer between various subspaces we introduce following notation to split the state $z$ as $z = [x^\top, y^\top]= [x_1^\top, x_2^\top , y^\top]$ and the $A$ matrix
% \footnote{For convenience of notation, we will sometime use the notation $z'$ to mean $z(t+1)$.}
% \begin{eqnarray}
% z(t+1)=\begin{pmatrix}x^{'}\\y^{'}\end{pmatrix}=\begin{pmatrix}A_x&A_{xy}\\ A_{yx}&A_{y}\end{pmatrix}\begin{pmatrix}x\\y\end{pmatrix}+\sigma \xi\label{splittingxy}.
% \end{eqnarray}

as:
\begin{eqnarray}
\begin{pmatrix}A_x&A_{xy}\\ A_{yx}&A_{y}\end{pmatrix}=\begin{pmatrix}A_{x_1}&A_{x_1x_2}& A_{x_1 y}\\A_{x_2x_1}&A_{x_2}& A_{x_2 y}\\ A_{y x_1}&A_{y x_2}& A_{y}\end{pmatrix}.\label{splittingA}
\end{eqnarray}
Based on this splitting, the covariance matrix $\Sigma$ at time instant $t$ can be written as follows: 
\begin{eqnarray}
\Sigma=\begin{pmatrix}\Sigma_x&\Sigma_{xy}\\\Sigma_{xy}^\top& \Sigma_y\end{pmatrix}=\begin{pmatrix} \Sigma_{x_1}&\Sigma_{x_1x_2}&\Sigma_{x_1 y}\\\Sigma_{x_1x_2}^\top&\Sigma_{x_2}&\Sigma_{x_2 y}\\\Sigma_{x_1y}^\top&\Sigma_{x_2y}^\top&\Sigma_{y}\end{pmatrix}.\nonumber\\
\label{sigma_dec}
\end{eqnarray}
Using the above notation, we can derive explicit expressions for information transfer in a linear dynamical system during transient and steady-state. In particular, we have the following expression for information transfer between various subspace
\begin{eqnarray}
[T_{x\to y}]_t^{t+1}=\frac{1}{2}\log \frac{|A_{yx}\Sigma^s_{y}(t)A_{yx}^\top +\sigma^2 I|}{\sigma^2}
\end{eqnarray}
where $\Sigma^s_y(t)=\Sigma_x(t)-\Sigma_{xy}(t)\Sigma_y(t)^{-1}\Sigma_{xy}(t)^\top$ is the Schur complement of $\Sigma_{y}(t)$ in the matrix $\Sigma(t)$ and

\begin{eqnarray}
[T_{x_1\to y}]_t^{t+1}=\frac{1}{2}\log \frac{|A_{yx}\Sigma^s_y(t)A_{yx}^\top +\sigma^2 I |}{|A_{yx_2}(\Sigma_y^{s})_{yx_2}(t)A_{yx_2}^\top+\sigma^2 I|}\label{transferx1y}
\end{eqnarray}

where $ (\Sigma_y^s)_{yx_2}$ is the Schur complement of $\Sigma_{y}$ in the matrix 
\[\begin{pmatrix}\Sigma_{x_2}&\Sigma_{x_2y}\\\Sigma_{x_2 y}^\top&\Sigma_y\end{pmatrix}.\]
The general expression for information transfer between scalar state $z_i$ and $z_j$ for linear network system can be derived from (\ref{transferx1y}). In particular with no loss of generality we can assume $z_i=z_1$ and $z_j=z_2$, then the expression for  $T_{z_1\to z_2}$ can be obtained from (\ref{transferx1y}) by defining 
\[x_1:=z_1,\;\;\; y=z_2, ,\;\;x_2:=(z_3,\ldots,z_N)\] 
\[ z_{\not{2}}=(z_1,z_3,z_4\ldots,z_N), z_{\not{1}\not{2}}=(z_3,z_4,\ldots,z_N).\]

For linear systems with Gaussian noise, the one step zero transfer can be characterized by looking at system matrix $A$. In particular, we have, $A_{z_jz_i}=0$, if and only if $[T_{z_i\to z_j}]_t^{t+1}= 0$ for all $t\in\mathbb{Z}_{\geq 0}$. For details see \cite{sinha_IT_CDC2016,sinha_IT_ICC}.

\subsection{Information transfer and stability of linear systems}
As discussed earlier, information transfer can be used to measure the influence of one state variable on another state. However, the state to state information transfer can also be used as an indicator of the instability of a system. In particular, we have the following theorem connecting information transfer and stability of the system matrix \cite{sinha_IT_power_transaction}.

\begin{theorem}\label{IT_stability_theorem}
Consider the linear system
\begin{eqnarray}\label{linear_sys_gaussian_noise}
x(t+1) = A x(t) + \xi(t)
\end{eqnarray}
where $x(t)\in\mathbb{R}^n$ is the state of the system, $t\in\mathbb{Z}_{\geq 0}$ is the time parameter, taking values in non-negative integers, $A\in\mathbb{R}^{n\times n}$ is the system matrix and $\xi(t)$ is a zero mean i.i.d. Gaussian noise with covariance $Q = E[\xi(t)\xi(t)^\top]$. Then the system matrix $A$ is Hurwitz if and only if all the information transfers as defined in (\ref{IT_formula}) are well defined and converge to a steady state value.
\end{theorem}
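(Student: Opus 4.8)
The plan is to reduce the entire statement to the behaviour of the state covariance $\Sigma(t)$, which for (\ref{linear_sys_gaussian_noise}) satisfies the discrete Lyapunov recursion $\Sigma(t)=A\Sigma(t-1)A^\top+Q$ with $\Sigma(0)$ the covariance of the Gaussian initial condition. The key preliminary remark is that, exactly as in the derivation of (\ref{transferx1y}), every information transfer $[T_{x_1\to y}]_t^{t+1}$ is an explicit function of $\Sigma(t)$ together with fixed blocks of $A$ and $Q$, assembled from block selection, inversion, Schur complements, determinants and $\log$. Since $\Sigma(t)=A\Sigma(t-1)A^\top+Q\succeq Q\succ0$ for every $t\ge1$, the matrix $\Sigma(t)$ is positive definite, hence so is each of its principal submatrices and each relevant Schur complement, and therefore every matrix occurring inside a $\log\det$ -- for instance $A_{yx}\Sigma_y^s(t)A_{yx}^\top+Q_y$ (with $Q_y$ playing the role of $\sigma^2I$ in (\ref{transferx1y})) -- is positive definite. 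Thus at every finite time all the transfers are well defined and finite, and they depend continuously on $\Sigma(t)$ over the positive-definite cone.

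For the forward implication, assume $A$ is Hurwitz (Schur stable). From $\Sigma(t)=A^t\Sigma(0)(A^t)^\top+\sum_{k=0}^{t-1}A^kQ(A^k)^\top$ and the estimate $\|A^k\|\le Cr^k$ with $r<1$, the first term vanishes and the series converges absolutely to the unique positive-definite solution $\Sigma_\infty$ of $\Sigma_\infty=A\Sigma_\infty A^\top+Q$, so $\Sigma(t)\to\Sigma_\infty$. By the continuity recorded above, every $[T_{x_1\to y}]_t^{t+1}$ then converges to the steady-state value obtained by substituting $\Sigma_\infty$ into the formula, which proves that all information transfers are well defined and convergent.

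For the converse I would argue by contraposition. If $A$ is not Hurwitz, pick an eigenvalue $\lambda$ with $|\lambda|\ge1$ and a (possibly complex) left eigenvector $u$; the recursion gives $u^{*}\Sigma(t)u=|\lambda|^{2}u^{*}\Sigma(t-1)u+u^{*}Qu$, and since $Q\succ0$ forces $u^{*}Qu>0$ while $u^{*}\Sigma(t-1)u\ge0$, the scalar $u^{*}\Sigma(t)u$ grows at least linearly (geometrically if $|\lambda|>1$), so $\|\Sigma(t)\|\to\infty$. It then remains to convert this divergence of the covariance into divergence of some particular transfer: concretely, to find a splitting $z=(x,y)$ for which the conditional covariance $A_{yx}\Sigma_y^s(t)A_{yx}^\top+Q_y=\mathrm{Cov}(y(t+1)\mid y(t))$ is unbounded, since then $[T_{x\to y}]_t^{t+1}=\tfrac12\log\frac{|A_{yx}\Sigma_y^s(t)A_{yx}^\top+Q_y|}{|Q_y|}$ cannot converge.

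This last step is the main obstacle, because an unbounded $\Sigma(t)$ does not by itself force an unbounded conditional covariance: the growth could lie outside the range of $A_{yx}$ for every admissible splitting, or be exactly removed by conditioning on the complementary coordinates. I would close the gap using the structure of $A$ relevant to a power network -- namely that its dependency digraph is strongly connected -- to choose $y$ to be a single coordinate that both participates in the growing mode and receives an arrow of the digraph, so that $A_{yx}\neq0$; one then argues that the fresh full-rank noise $Q\succ0$ injected at each step keeps the growing direction non-degenerate, so it survives in the Schur complement $\Sigma_y^s(t)$, and tracking the recursion shows the corresponding $\log\det$ grows without bound, contradicting convergence of that transfer. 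A softer alternative, sufficient if ``well defined'' is read in the strict sense of the steady-state formulas, is simply to note that when $A$ is not Hurwitz the limit $\Sigma_\infty$ does not exist, so the steady-state transfers read off from (\ref{transferx1y}) are undefined, which already contradicts the conclusion.
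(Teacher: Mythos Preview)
Your approach is essentially the same as the paper's: reduce everything to the covariance recursion $\Sigma(t)=A\Sigma(t-1)A^\top+Q$ and then read off the behaviour of $[T_{x_1\to y}]_t^{t+1}$ from formula~(\ref{transferx1y}). The paper's own proof is in fact just a two-line pointer to exactly these two ingredients, with the details deferred to \cite{sinha2019information}; you have supplied considerably more of the argument than the paper does.

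On substance: your forward implication is complete and correct. For the converse you are right to flag the gap---unboundedness of $\Sigma(t)$ does not automatically force unboundedness of a given conditional covariance after Schur complementation, and some structural input (your strong-connectivity argument, or the ``softer alternative'' reading of well-definedness as existence of the steady-state limit) is needed to finish. The paper does not resolve this point in the text either; it simply cites the companion reference. So relative to what is actually written here, your proposal is at least as complete, and your explicit identification of where the difficulty lies in the converse is a genuine addition.
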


\begin{proof} 
See \cite{sinha_IT_power_transaction} for the proof.
\end{proof}

For example, consider a linear system
\begin{eqnarray}\label{sys_IT_stability}
\begin{aligned}
& x(t+1) = 0.4x(t) + 0.2y(t) + \xi_x(t)\\
& y(t+1) = \mu y(t) + \xi_y(t)
\end{aligned}
\end{eqnarray}
where $\mu\in [-0.99,0.99]$ and $\xi_x(t)$ and $\xi_y(t)$ are i.i.d. Gaussian noises of unit variance. The eigenvalues of the system are $(0.4,\mu)$ and hence as $|\mu|$ approaches one, the system approaches instability. The instability occurs due to $y$ dynamics and as $|\mu|$ increases, the entropy of $y$ increases rapidly. Hence, the steady state information transfer from $y$ to $x$ also increases rapidly as $|\mu|$ approaches one. This is shown in Fig. \ref{IT_stability}. 
\begin{figure}[htp!]
\centering
\includegraphics[scale=.3]{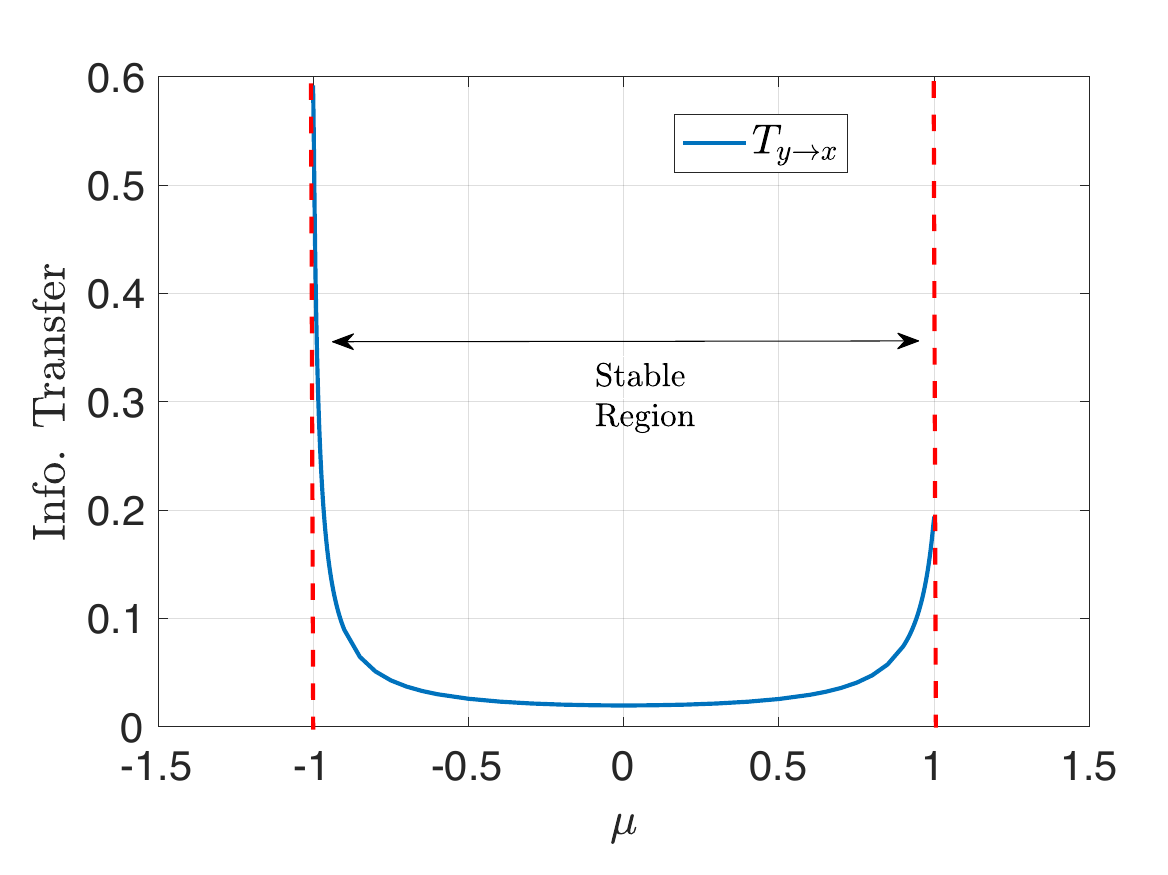}
\caption{Steady state information transfer increases rapidly as the system approaches instability.}\label{IT_stability}
\end{figure}
Conversely, if the information transfer from some state (subspace) to any other state (subspace) increases rapidly, it can be concluded that the system is approaching instability. The point to be noted is that the rapid increase in the steady-state information transfer happens when the system is still operating in the stable zone. Hence, information transfer acts as an indicator and can be used to predict the onset of instability and hence one can take preventive measures before the system becomes unstable.

\section{Data-driven computation of Information Transfer}\label{section_data_IT}

In this section, we discuss the data-driven approach to compute the information transfer for a dynamical system. For details, see \cite{sinha_IT_data_acc,sinha_IT_data_journal}.

Consider a data set obtained from a random dynamical system $z\mapsto T(z,\xi)$ 
\begin{eqnarray}
{\cal Z}  = [z_0,z_1,\ldots,z_M]
 \label{data}
\end{eqnarray}
where $z_i\in Z\subset \mathbb{R}^N$. The data-set $\{z_k\}$ can be viewed as sample path trajectory generated by random dynamical system and could be corrupted by either process or measurement noise or both. In \cite{robust_dmd_acc,sinha_robust_DMD_journal}, it was shown that Dynamic Mode Decomposition (DMD) \cite{schmid_DMD} or Extended Dynamic Mode Decomposition (EDMD) \cite{williams_EDMD,mezic_EDMD} algorithms fail to compute the Koopman operator efficiently and in \cite{robust_dmd_acc,sinha_robust_DMD_journal}, the authors provided a solution to this problem by computing a Robust Koopman operator which explicitly takes into account both process and measurement noise.

Let $\mathbf{\Psi}:Z\to \mathbb{C}^{K}$ such that
\begin{equation}
\mathbf{\Psi}(z):=\begin{bmatrix}\psi_1(z) & \psi_2(z) & \cdots & \psi_K(z)\end{bmatrix}.\label{dic_function}
\end{equation}

With this, the robust Koopman operator $(\bf K)$ can be obtained as a solution to the following optimization problem \cite{robust_dmd_acc,sinha_robust_DMD_journal}

\begin{eqnarray}\label{rob_eqv}
\min\limits_{\bf K}\parallel {\bf G}{\bf K}-{\bf A}\parallel_F+\lambda \parallel {\bf K}\parallel_F\label{regular}
\end{eqnarray}
where
\begin{eqnarray}
&&{\bf G}=\frac{1}{M}\sum_{m=0}^{M-1} \boldsymbol{\Psi}({ z}_m)^\top \boldsymbol{\Psi}({z_m })\nonumber\\
&&{\bf A}=\frac{1}{M}\sum_{m=0}^{M-1} \boldsymbol{\Psi}({ z}_m)^\top \boldsymbol{\Psi}({ z}_{m+1}),
\end{eqnarray}
${\bf K}\in\mathbb{R}^{K\times K}$ is the robust Koopman operator and $\lambda$ is the regularization parameter which depends on the bounds of the process and measurement noise.

We use ${\bf \Psi}(z) = z$, so that $\bar A=\bf K\in \mathbb{R}^{N\times N}$ is the estimated system dynamics obtained using optimization formulation (\ref{rob_eqv}). Under the assumption that the initial covariance matrix is $\bar \Sigma(0)$, the propagation of the covariance matrix under the estimated system dynamics $\bar A$ is given by 
\begin{eqnarray}
\bar \Sigma(t)=\bar A\bar \Sigma(t-1)\bar {A}^\top+\sigma^2 I\label{covariance_prop}
\end{eqnarray}
Both $\bar A$ and $\bar \Sigma$ can be decomposed according to Eqs. (\ref{splittingA}) and (\ref{sigma_dec}) and the conditional entropy $H(y_{t+1}|y_t)$ for the non-freeze case is computed using the following formula \cite{sinha_IT_CDC2016,sinha_IT_ICC}.  
\begin{eqnarray}\label{cond_entr}
H(y_{t+1}|y_t) = \frac{1}{2}\log |\bar A_{yx}\bar \Sigma_y^S(t)\bar A_{yx}^\top + \left(\frac{\lambda}{3}\right)^2 I|.
\end{eqnarray}
where $|\cdot |$ is the determinant, $\lambda$ is the bound on the process noise and $\bar \Sigma_y^S(t)$ is the Schur complement of $y$ in the covariance matrix $\bar \Sigma(t)$. In computing the entropy, we assume that the noise is i.i.d. Gaussian with covariance $\Sigma = \textnormal{diag}(\sigma^2,\cdots ,\sigma^2)$ so that one can take the bound as $\lambda = 3\sigma$, to cover the essential support of the Gaussian distribution.

For computing the dynamics when $x$ is held frozen, we modify the obtained data as follows. For simplicity, we describe the procedure for a two-state system and the method generalizes easily for the $n$-dimensional case. Let the obtained time series data be given by

\begin{eqnarray}
\mathcal{D}=\bigg[\begin{pmatrix}
x_0\\
y_0
\end{pmatrix},  \begin{pmatrix}
x_1\\
y_1
\end{pmatrix}, 
\cdots, \begin{pmatrix}
x_{M-1}\\
y_{M-1}
\end{pmatrix}\bigg] \label{data_original}
\end{eqnarray}

To replicate the effect of $x$ freeze dynamics (\ref{system_xfreeze}) we modify the original data set (\ref{data_original}) as follows.  
\begin{small}
\begin{eqnarray}
\mathcal{D}_{\not{x}}=\bigg[\left\{\begin{pmatrix}
x_0\\
y_0
\end{pmatrix}, \begin{pmatrix}
x_0\\
y_1
\end{pmatrix}\right\}, \left\{\begin{pmatrix}
x_1\\
y_1
\end{pmatrix}, \begin{pmatrix}
x_1\\
y_2
\end{pmatrix}\right\},
\cdots, \nonumber\\
\cdots ,\left\{ \begin{pmatrix}
x_{M-1}\\
y_{M-1}
\end{pmatrix}, \begin{pmatrix}
x_{M-1}\\
y_M
\end{pmatrix}\right\}\bigg]\label{mod_data}
\end{eqnarray}
\end{small}
If the original data set has $M$ data points, then the modified data set has $(2M-2)$ data points. The idea is to find the best mapping that propagates points of the form $[x_{t-1} \quad y_{t-1}]^\top$ to $[x_{t-1} \quad y_t]^\top$  (i.e., $x$ freeze) for $t = 1, 2, \hdots , M$. The estimated dynamics $\bar A_{\not x}$, when $x$ is frozen,  is calculated using the optimization formulation (\ref{regular}), but this time  applied to the data set (\ref{mod_data}). Once the frozen model is calculated, the entropy $H_{\not{x}}(y_{t+1}|y_t)$ is calculated using exactly the same procedure outline for $H(y_{t+1}|y_t)$ but this time applied to $\bar A_{\not x}$. Finally the information transfer from $x\to y$ is computed using the formula
\[T_{x\to y}=H(y_{t+1}|y_t)-H_{\not{x}}(y_{t+1}|y_t)\]

The algorithm for computing the information transfer for the identified linear system case can be  summarized as follows:
\begin{algorithm}[htp!]
\caption{Computation of Information Transfer}
\begin{enumerate}
\item{From the original data set (\ref{data_original}), compute the estimate of the system matrix $\bar A$ using the optimization formulation (\ref{rob_eqv}).}

\item{Assume $\bar \Sigma(0)$ and compute $\bar \Sigma(t)$ using Eq. (\ref{covariance_prop}). Determine $\bar A_{yx}$ and $\bar \Sigma_y^S$ to calculate the conditional entropy $H(y_{t+1}|y_t)$ using (\ref{cond_entr}).}
\item{From the original data set (\ref{data_original}) form the modified data set for the $x$ freeze dynamics as given by Eq. (\ref{mod_data}).}
\item{Follow steps (1)-(2) to compute the conditional entropy $H_{\not{x}}(y_{t+1}|y_t)$.}
\item{Compute the transfer $T_{x\to y}$ as $T_{x\to y} = H(y_{t+1}|y_t) - H_{\not{x}}(y_{t+1}|y_t)$.}
\end{enumerate}\label{algo_IT}
\end{algorithm}

\section{Stability Characterization of IEEE 3 Bus System}\label{section_3_bus}
The study of power systems' stability is a complex phenomenon. Thus, to understand the phenomenon in a simplistic way various features such as the magnitude of disturbance, time scale and causing parameters are considered in isolation. Also, the dynamic behavior of power systems is nonlinear in nature. That is why for stability analysis, usually a facile approach of linearizing the system dynamics near an equilibrium is considered and this analysis is referred to as small signal analysis of power system dynamics \cite{kundur_stability_classification}. Small signal instability can be voltage instability or angle instability, but in general, there is no unified method to classify instability as voltage or angle instability. The IEEE 3 bus system, shown in Fig. \ref{3_bus_fig}(a), is one of the very few systems where the angle instability and voltage instability has been classified \cite{ajjarapu_bifurcation}. Using the information transfer, we will show how small signal stability can be further studied to identify causation and participating states. 

\begin{figure}[htp!]
\centering
\subfigure[]{\includegraphics[scale=.45]{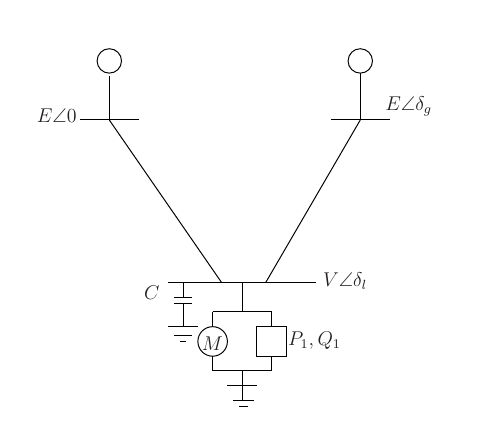}}
\subfigure[]{\includegraphics[scale=.24]{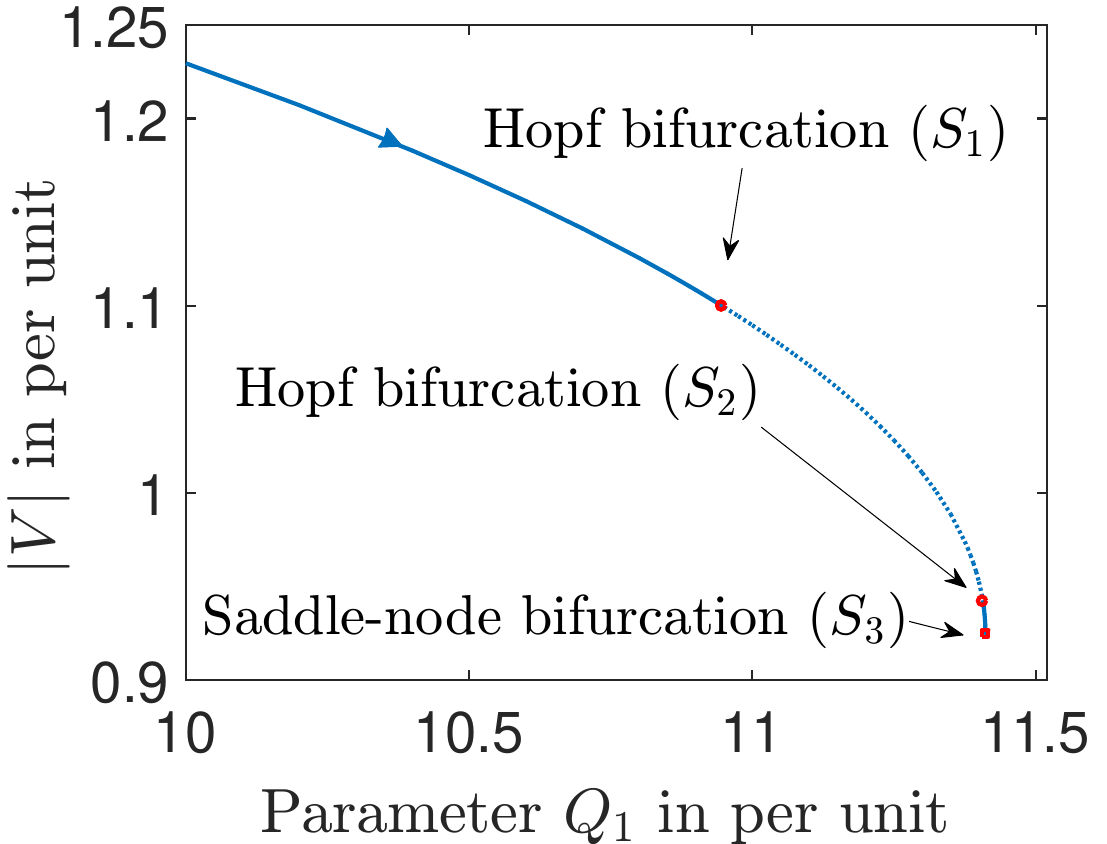}}
\caption{(a) 3-bus test case. (b) Bifurcation points on PV curve.}\label{3_bus_fig}
\end{figure}

Consider the IEEE 3-bus power network with two generators and a load, as shown in Fig. \ref{3_bus_fig}(a). The load is modelled as an induction motor in parallel with a constant $PQ$ load. The system is modelled as a four-dimensional dynamical system with the states being generator angle $(\delta_g)$, generator angular velocity $(\omega)$, the load angle $(\delta_l)$ and magnitude of load voltage $(V)$. 
The dynamic equations for the system are
\begin{eqnarray}\label{eq_del}
&& \dot{\delta_g} = \omega\\ \nonumber
&& \dot{\omega} = 16.66667\sin (\delta_l - \delta_g + 0.08727)V\\ \label{eq_omega}
&& \qquad - 0.16667 \omega + 1.88074\\ \nonumber
&& \dot{\delta_l} = 496.87181 V^2 - 166.66667\cos (\delta_l - \delta_g \\ \nonumber
&& \qquad - 0.08727)V - 666.66667\cos (\delta_l - 0.20944)V\\ \label{eq_delL}
&& \qquad 93.33333V + 33.33333Q_1 + 43.33333\\ \nonumber
&& \dot{V} = -78.76384V^2 + 26.21722\cos (\delta_l - \delta_g \\ \nonumber
&& \qquad - 0.01241)V + 104.86887\cos (\delta_l - 0.13458)V\\ \label{eq_VL}
&& \qquad + 14.52288V - 5.22876Q_1 - 7.03268
\end{eqnarray}
For a detailed analysis of the system equations, we refer the interested reader to \cite{dobson_model, ajjarapu_bifurcation}.

The above power network has three \emph{critical points}, namely $S_1$, $S_2$ and $S_3$, as shown in Fig. \ref{3_bus_fig}(b). At $S_1$ and $S_2$, a pair of imaginary eigenvalues cross the imaginary axis and at $S_3$, a real eigenvalue becomes zero. Hence, the system becomes unstable at $S_1$, remains unstable from $S_1$ to $S_2$, then regains stability after $S_2$ and again becomes unstable at $S_3$. It is known that the instability at $S_1$ is angle instability and the instability at $S_3$ is voltage instability, that is, it is the angle variable that causes instability at $S_1$ and at $S_3$ it is the voltage that is the cause for instability. 

\begin{figure}[htp!]
\centering
\includegraphics[scale=.4]{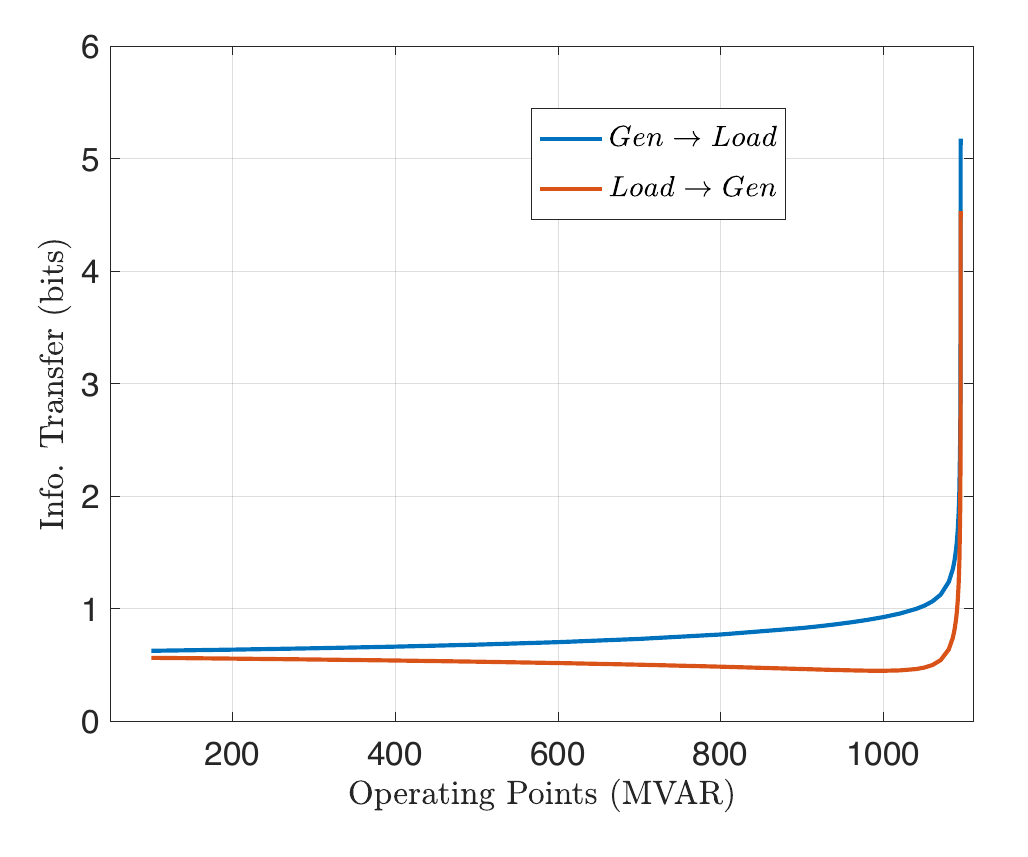}
\caption{Steady state information transfer between the generator and load subspaces over the operating points before the Hopf bifurcation $S_1$.}
\label{fig_IT_gen_load}
\end{figure}

In Fig. \ref{fig_IT_gen_load} we plot the steady-state information transfer between the generator subspace $[\delta_g,\omega]$ and the load subspace $[\delta_l,V]$, over the operating points. Time-series data of all the state variables were collected for 50-time steps at each of the 34 operating points and information transfer was computed using the algorithm \ref{algo_IT} at each of these 34 operating points. The operating points were changed by changing the reactive power $Q_1$ in (\ref{eq_del})-(\ref{eq_VL}) from 100 MVAR to 1094.6 MVAR after which the system undergoes the first Hopf bifurcation (at $S_1$) and becomes unstable. It can be seen that for most of the operating points the generator subspace has a greater influence on the load subspace than the influence of load on the generator and as the system becomes almost unstable the information transfer increases rapidly. Since the information from the generator is greater, it is reasonable to think that it is the generator subspace that is more responsible for the instability at $S_1$ than the load subspace. 

\begin{figure}[htp!]
\centering
\subfigure[]{\includegraphics[scale=.3]{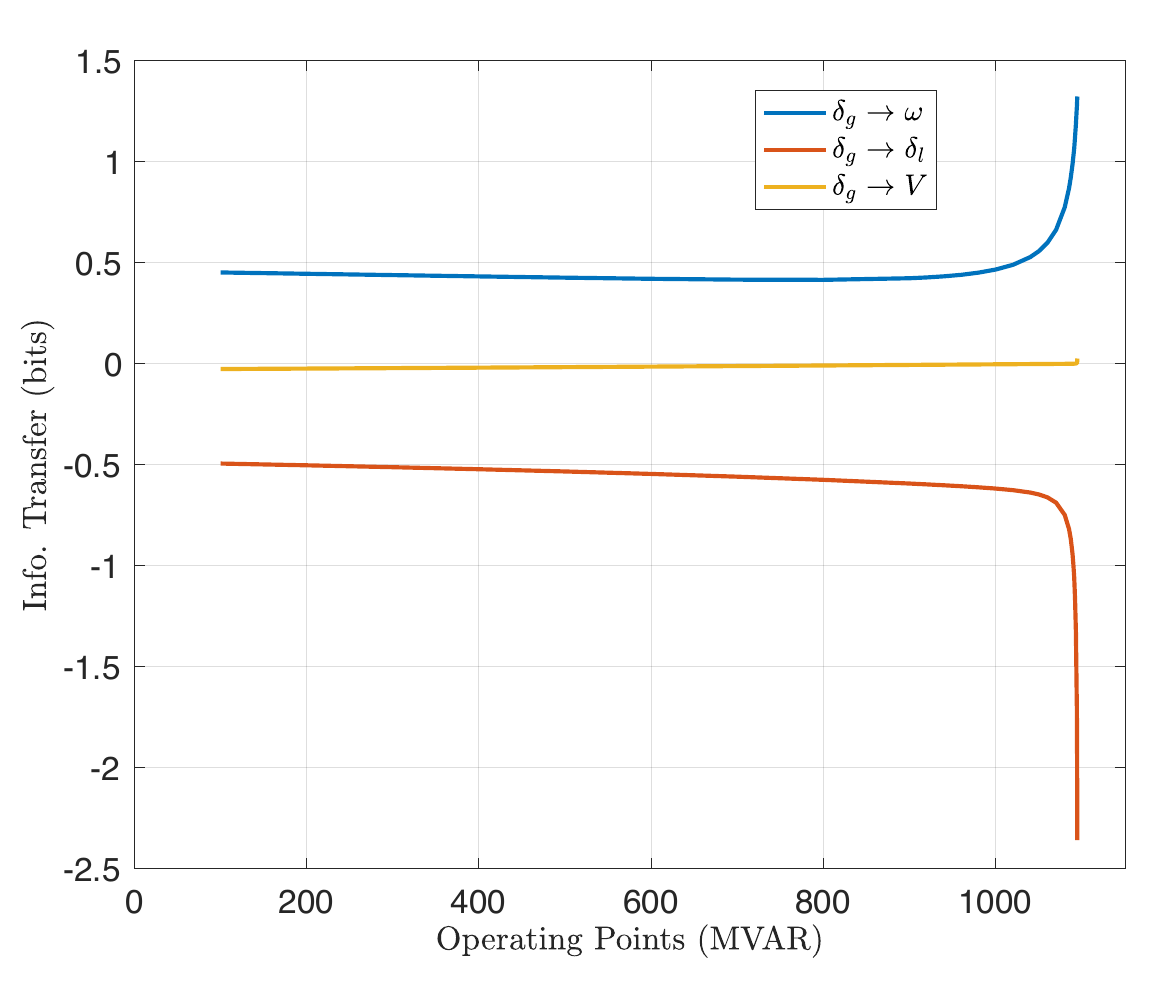}}
\subfigure[]{\includegraphics[scale=.3]{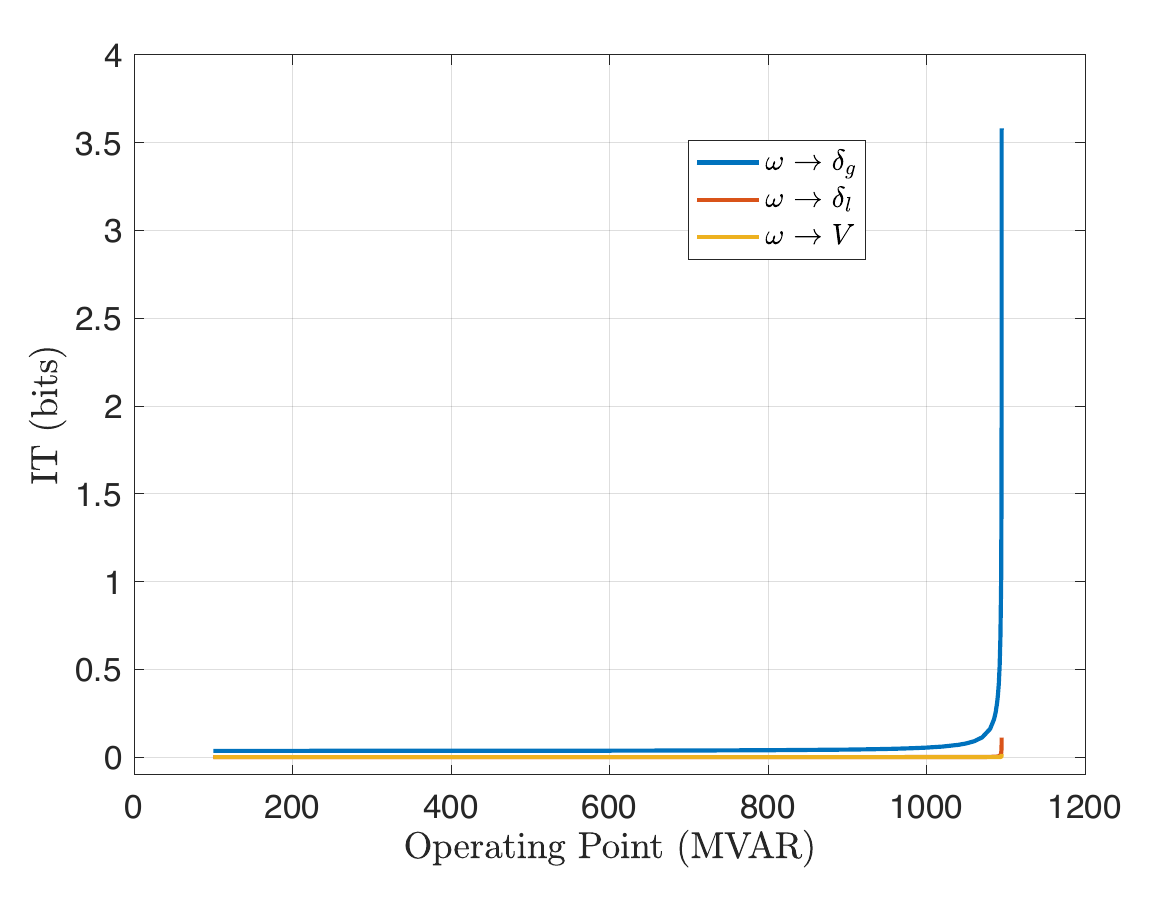}}
\subfigure[]{\includegraphics[scale=.3]{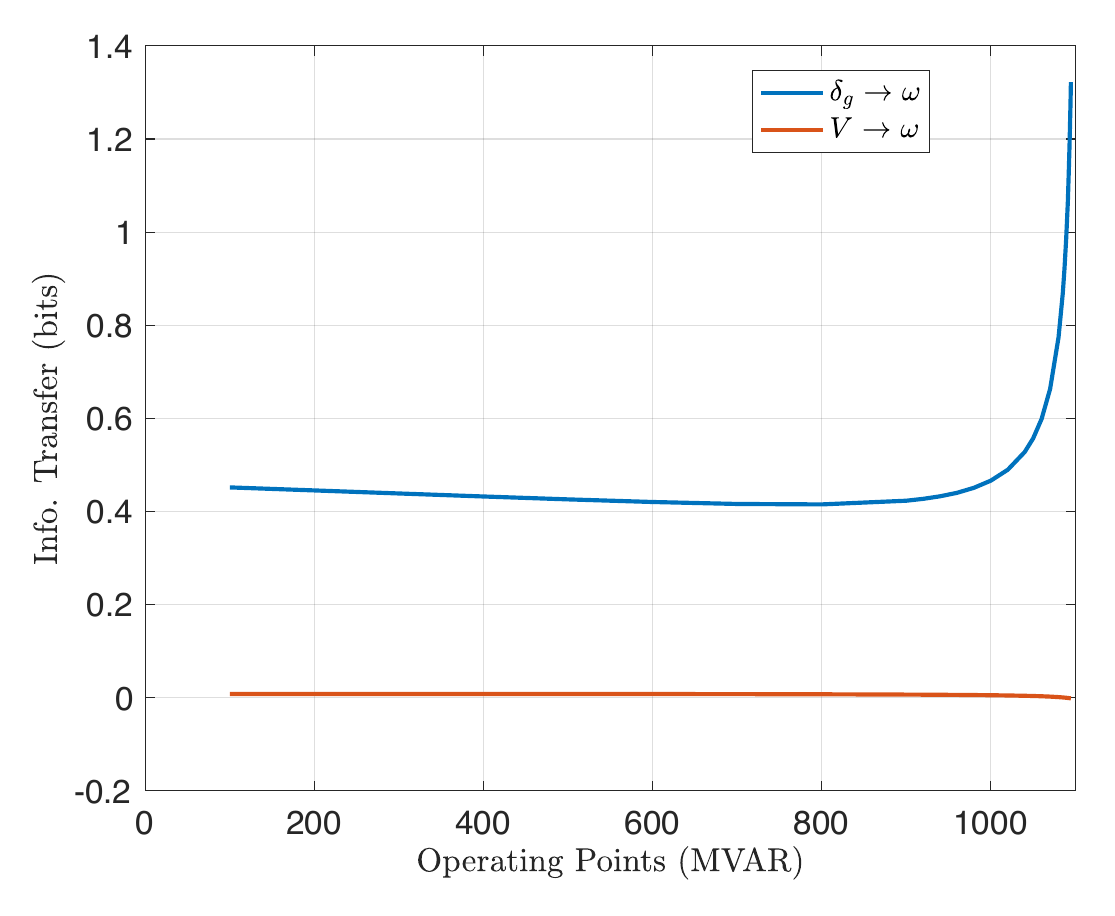}}
\caption{(a) Steady state information transfer from the angle of the generator to the other states before $S_1$. (b) Steady state information transfer from the angular speed of the generator to the other states before $S_1$. (c) Steady state information transfer to the angular speed variable from angle of the generator and voltage of the load.}\label{fig_IT_before_S1}
\end{figure}

In order to identify the state(s) responsible for instability, we now zoom into each of the subspaces. In Fig. \ref{fig_IT_before_S1}(a) and (b), we plot the information transfer from the angle of the generator $(\delta_g)$ and angular speed of the generator $(\omega)$ to all the other states. It can be observed that the absolute value of the information transfer from both the angle of the generator and angular speed of the generator increases rapidly and thus from theorem \ref{IT_stability_theorem}, we infer that the instability at $S_1$ is caused by the angle and the angular speed of the generator and thus the instability at $S_1$ is angle instability. In Fig. \ref{fig_IT_before_S1}(c) we plot the information transfer from the angle of the generator and the load voltage to the angular speed variable. It is observed that though the information transfer from the angle to the angular speed shows a sharp rise, the information transfer from the voltage to the angular speed of the generator almost remains zero. Hence, the load voltage is not making the system unstable and hence this reaffirms the statement that the instability at $S_1$ is angle instability and not voltage instability. The same can be inferred from the participation factor analysis \cite{participation_part1,verghese1982selective} and the participation of each of the states to the most unstable mode at $S_1$ is given in Table \ref{part_fact_unstable}.

{\small
\begin{table}[htp!]
\centering
\caption{Participation Factor to most unstable mode at $S_1$}\label{part_fact_unstable}
\begin{tabular}{|c|c|c|c|}
\hline
State\textbackslash  Index & Participation Factor  \\
\hline
    $\delta_g$ & $0.4825$  \\
     $\omega$ & $0.4821$ \\
     $\delta_l$ & $0.0071$ \\
     $V$ & $0.0283$\\
\hline
\end{tabular}
\end{table}}

As $Q_1$ is increased further after $S_1$, the system undergoes the first Hopf bifurcation and becomes unstable. But it regains stability at $S_2$ and as $Q_1$ is increased it undergoes saddle node bifurcation at $S_3$ and again becomes unstable. It is know that the instability at $S_3$ is voltage instability \cite{ajjarapu_bifurcation}. 
\begin{figure}[htp!]
\centering
\subfigure[]{\includegraphics[scale=.32]{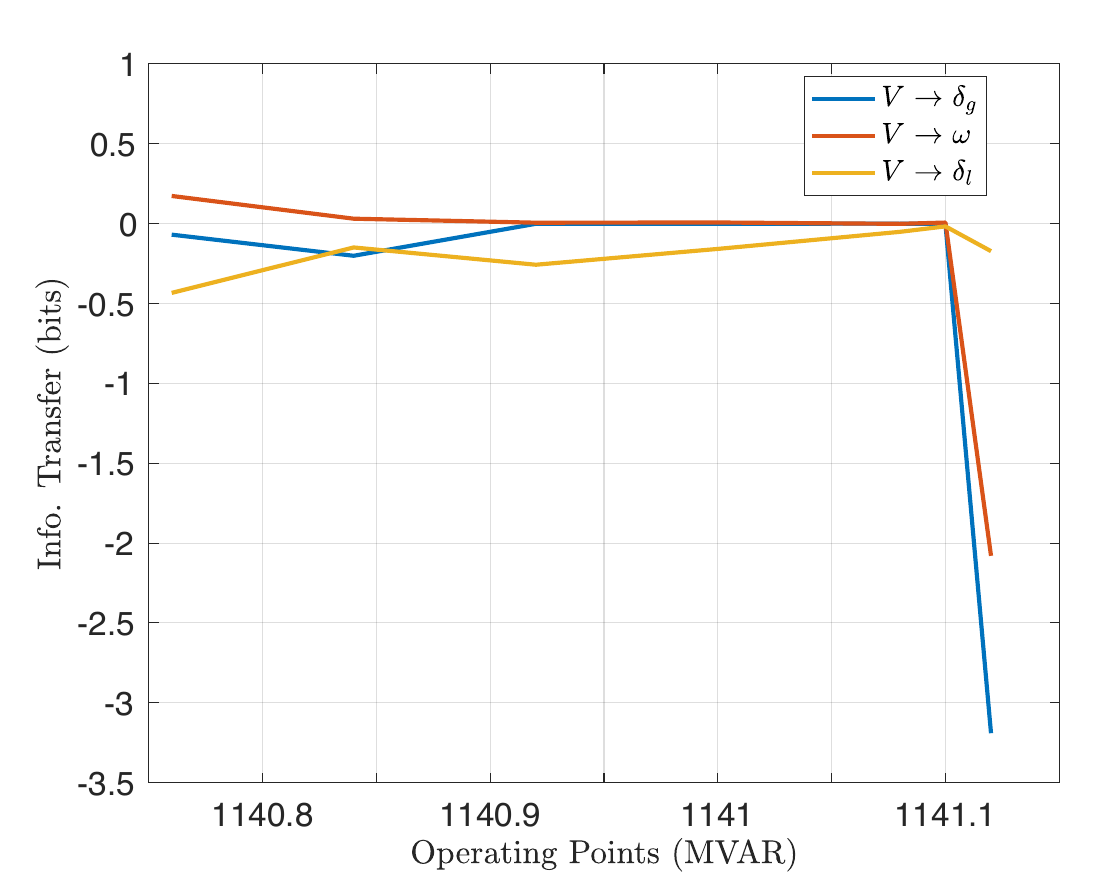}}
\subfigure[]{\includegraphics[scale=.32]{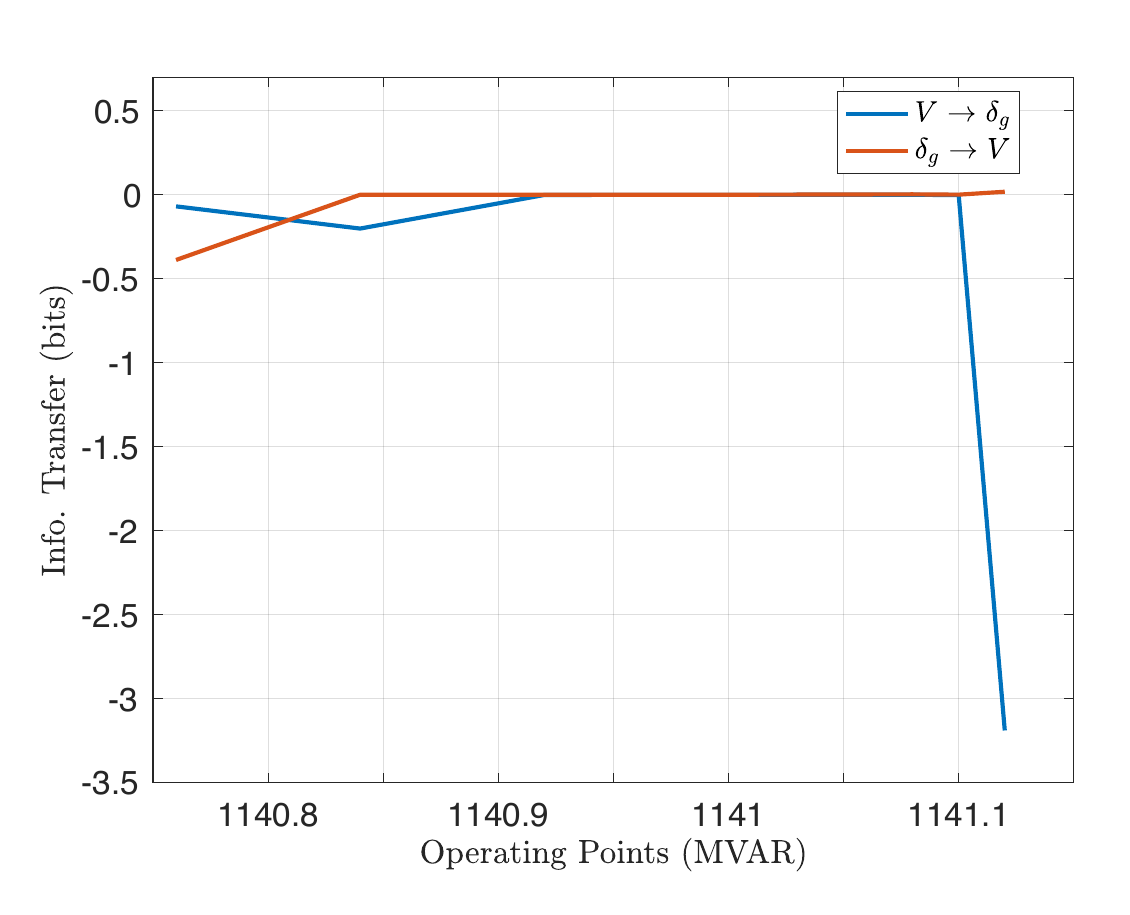}}
\caption{(a) Steady state information transfer from load voltage to all the states before $S_3$. (b) Steady state information transfer between angle of generator and load voltage before $S_3$.}\label{fig_IT_before_S3}
\end{figure}
In Fig. \ref{fig_IT_before_S3}(a) we plot the steady-state information transfer from the load voltage to all the other states through the operation points between $S_2$ and $S_3$. It can be seen that the information transfers from the load voltage increase rapidly as $Q_1$ is increased and the system approaches instability at $S_3$. Hence from theorem \ref{IT_stability_theorem} we infer that the instability at $S_3$ is caused by the load voltage. In Fig. \ref{fig_IT_before_S3}(b) we plot the information transfers between the generator angle and the load voltage. The information transfer from the angle of the generator remains almost constant throughout the operating points while the absolute value of the information transfer from load voltage increases rapidly. Thus from this, we can conclude that the instability at $S_3$ is not caused by the generator angle, but by the load voltage. Hence, information transfer identifies the type of instability in the IEEE 3 bus system.

\section{Influence Characterization of IEEE 9 Bus System}\label{section_9_bus}
In this section, we study the IEEE 9 bus system with detailed dynamic modelling for influence characterization using the concept of information transfer between states and state clusters.

\begin{figure}[htp!]
\centering
\includegraphics[scale=.375]{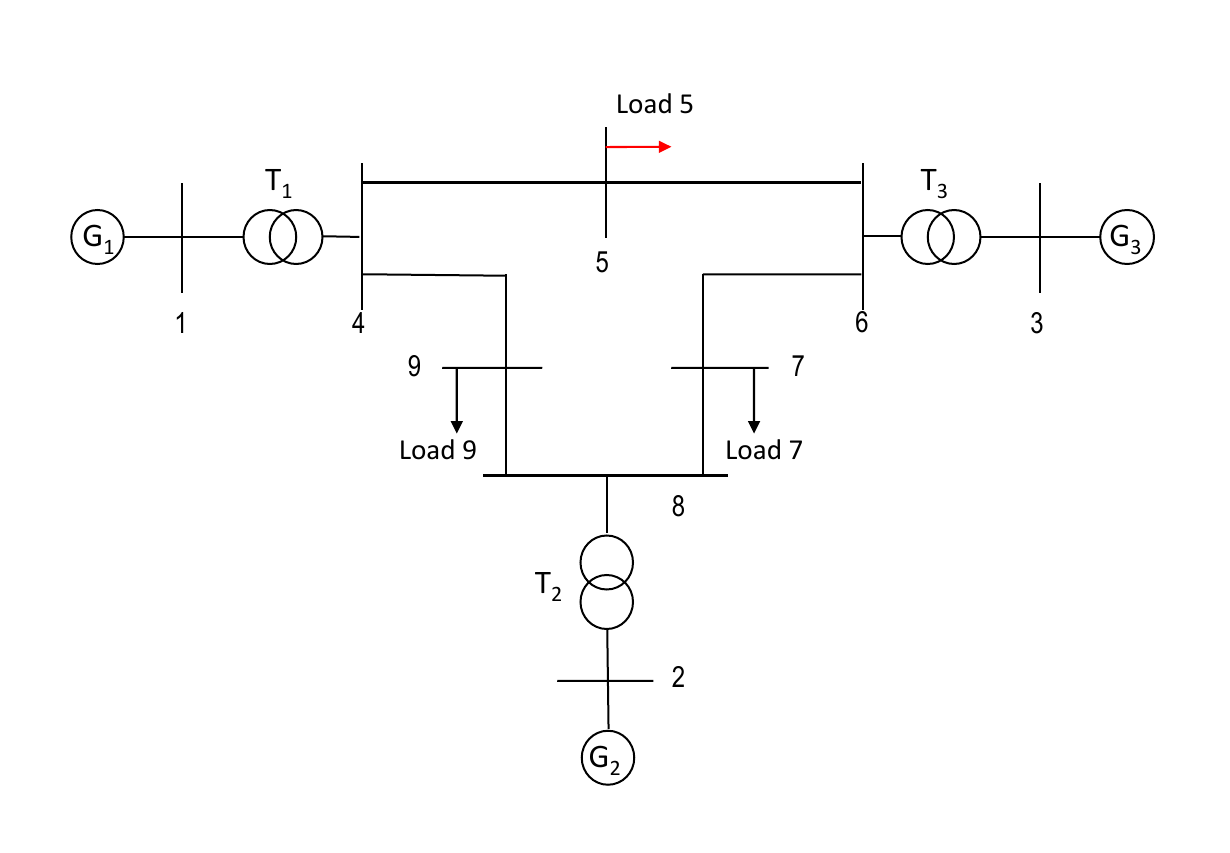}
\caption{IEEE 9 bus network}\label{9bus_diag}
\end{figure}

\subsection{IEEE 9 Bus System Modelling and Data Generation}\label{subsection_9busmodel}

To capture the dynamic behavior of a power system, dynamic algebraic equation (DAE) models are used as represented in equation (\ref{generator_dynamic_eq}). Here various components such as generator, load and controller dynamics are modeled along with network constraints. The generators are modelled as a $4^{th}$ order dynamical system with the following equations of motion.

{\small
\begin{eqnarray}\label{generator_dynamic_eq}
\begin{aligned}
&\frac{d\delta_i}{dt}  = \omega_i - \omega_s \\
&\frac{d\omega_i}{dt}  = \frac{T_{m_i}}{M_i} - \frac{E_{q_i}^{\prime} I_{q_i}}{M_i} - \frac{(X_{q_i} - X_{d_i}^{\prime})}{M_i} I_{d_i} I_{q_i} - \frac{D_i (\omega_i-\omega_s)}{M_i}  \\
&\frac{d E_{q_i}^{\prime}}{dt}  = -\frac{E_{q_i}^{\prime}}{T_{{do}_i}^{\prime}} - \frac{(X_{d_i} - X_{d_i}^{\prime})}{T_{{do}_i}^{\prime}} I_{d_i} + \frac{E_{{fd}_i}}{T_{{do}_i}^{\prime}} \\
&\frac{dE_{{fd}_i}}{dt}  = -\frac{E_{{fd}_i}}{T_{A_i}} + \frac{K_{A_i}}{T_{A_i}} (V_{{ref}_i} - V_i). 
% & \hspace{1 cm} \text{for}\;\ i = 1,\dots, n_g.
\end{aligned}
\end{eqnarray}}

Here, $\delta_i$, $\omega_i, E_{q_i}$, and $E_{{fd}_i}$ are the dynamic states of the generator and correspond to the generator rotor angle, the angular velocity of the rotor, the quadrature-axis induced emf and the emf of fast-acting exciter connected to the generator respectively. 

%The algebraic equations at the stator of the generator are: 
%%\textbf{Stator algebraic equations}
%{\small
%\begin{align}\label{stator_algebraic_eq}
%\begin{split}
%& V_i \sin(\delta_i - \theta_i) + R_{s_i} I_{d_i} - X_{q_i} I_{q_i}  = 0 \\
%& E_{q_i}^{\prime} - V_i \cos(\delta_i - \theta_i) - R_{s_i} I_{q_i} - X_{d_i}^{\prime} I_{d_i}  = 0 \\
%& \qquad \text{for}\quad  i = 1,\dots, n_g,
%\end{split}
%\end{align}}where $n_g$ is the number of generators. 

We also consider $3^{rd}$ order Power System Stabilizers (PSS) at each generator whose transfer function is given by
\begin{align}
\frac{\Delta V_{{ref}_i}(s)}{\Delta \omega_i(s)} = k_{pss} \frac{(1+sT_{num})^2}{(1+sT_{den})^2} \frac{s T_w}{1+sT_w} 
\label{eq_pss}
\end{align} 
where $k_{pss}$ is the PSS gain, $T_w$ is the time constant of wash-out filter and $T_{num}, T_{den}$ are time constants of phase-lead filter with $T_{num} > T_{den}$.

Further a $3^{rd}$ order load dynamic model is considered at load bus 5, as highlighted in Fig. \ref{9bus_diag}. For details on the modelling of the generator and the load we refer the reader to \cite{Sauer_pai_book}. The overall system dynamics is thus represented by $x \in \mathbb{R}^{24}$ order system. Further, system behavior is recorded at various load levels until it reaches voltage collapse point, as shown in Fig. \ref{9bus_pv}. At each operating point, time series data for 200-time steps is generated by perturbing system from equilibrium and this recorded data is used for influence characterization as described in section \ref{section_data_IT}.
\begin{figure}[htp!]
\centering
{\includegraphics[scale = 0.5,trim= 110 285 124 293, clip]{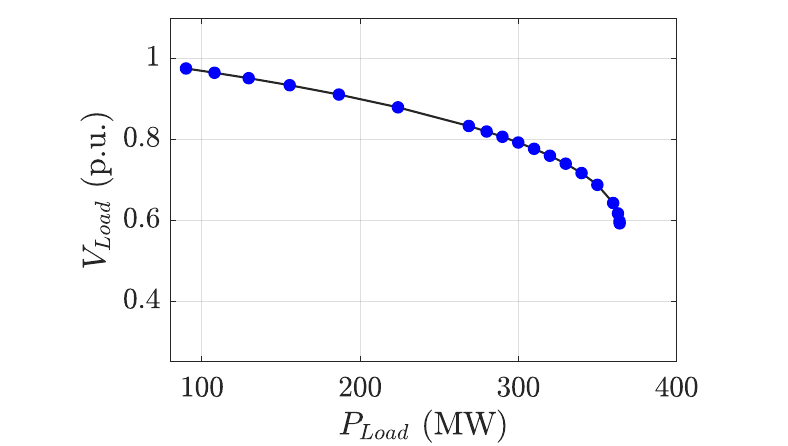}}
\caption{PV curve for IEEE 9 bus system}\label{9bus_pv}
\end{figure}

\begin{figure}[htp!]
\centering
\subfigure[]{\includegraphics[scale=.32]{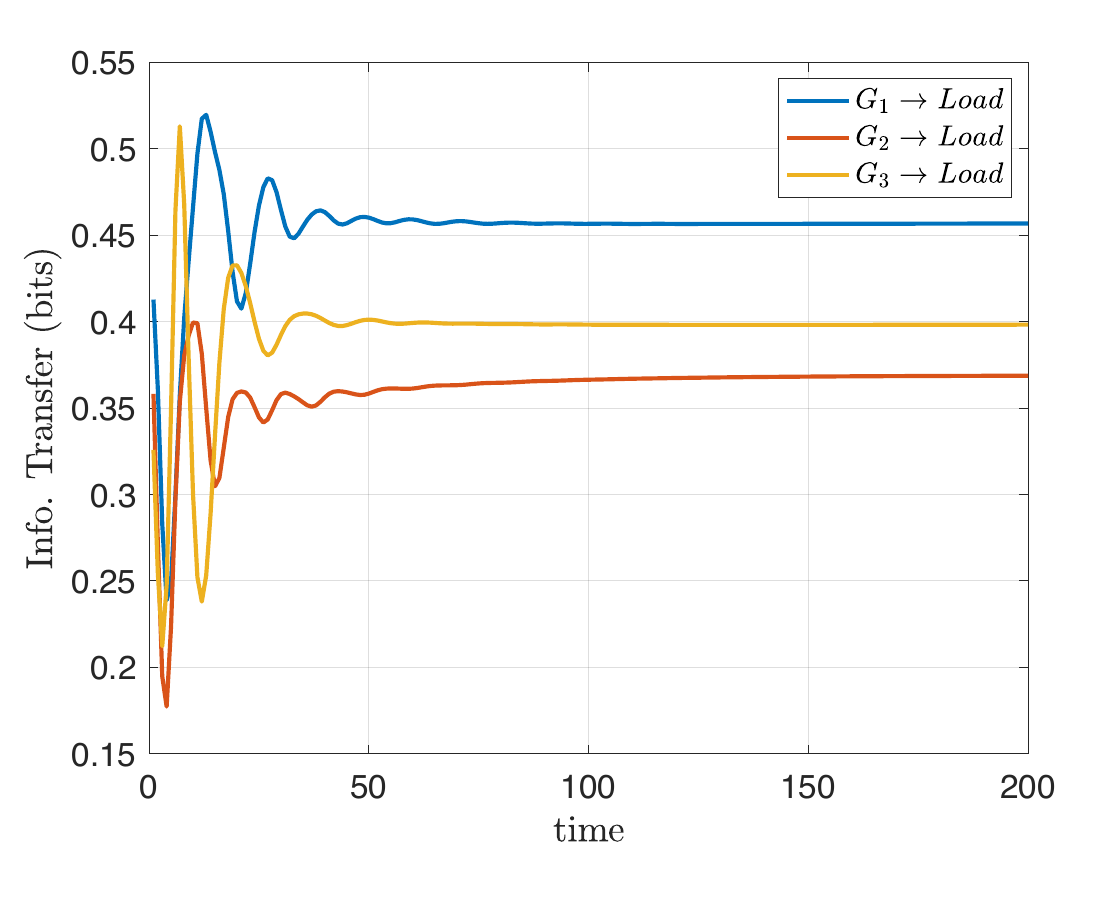}}
\subfigure[]{\includegraphics[scale=.32]{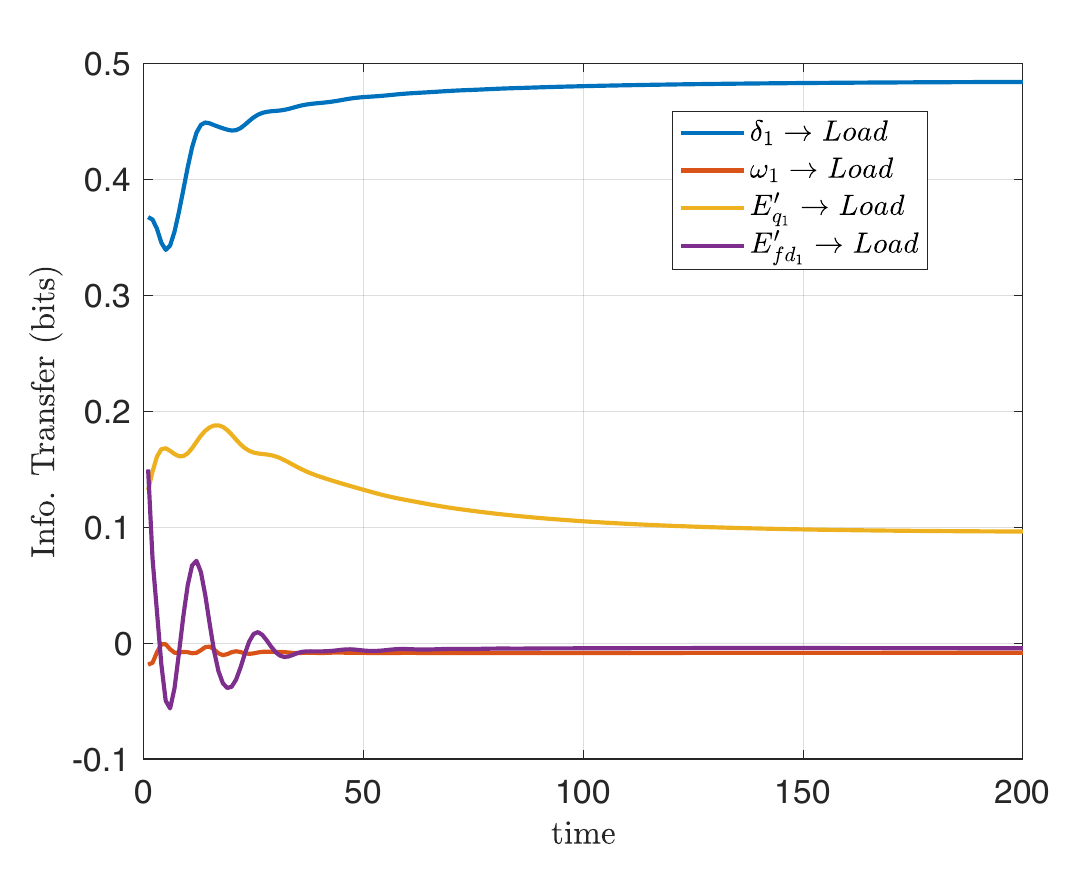}}
\caption{(a) Information transfer from each generator to load at load level 90 MW. (b) Information transfer from individual states of $G_1$ to load at load level 90 MW.}\label{fig_IT_to_load_op1}
\end{figure}

\subsection{Influence Characterization}

Satisfying the power demand for all the loads using generation sources is the primal objective of power system operation. Thus in power systems, it is crucial to identify causal interactions between the generator and load dynamics. Unlike participation factor analysis, which quantifies the participation of each state on the modes of the system, information transfer can quantify the influence of the states (and subspaces) on every other state (subspace). Since we want to identify which generator is influencing the load the most, we compute the information transfer from each generator subspace to the load subspace.

%Using the proposed notion we can compute information transfer between clusters of states in this case these clusters take the form of bundled generator and load dynamics.  At a stable operating point $P_{load} = 90 MW$, as shown in Fig. \ref{9bus_pv} information transfer from generators to load is computed.

As shown in Fig. \ref{fig_IT_to_load_op1}(a), for a load of $P_{load} = 90 MW$, $G_1$ transfer maximum information to load dynamics followed by $G_3$ and $G_2$ respectively. Once $G_1$ has been identified as the most influential generator, we now zoom into $G_1$ and identify which particular state of $G_1$ has the largest influence on the load dynamics. This zoom-in approach for information transfer computation helps greatly in the case of a power system, where thousands of state measurements increase the computation burden.

After zooming-in into individual dynamic states of $G_1$, we identify that $\delta_1$, corresponding to generator rotor angle has the highest influence on load dynamics as shown in Fig. \ref{fig_IT_to_load_op1} (b). This is in sync with the understanding of the physical behavior of a power grid in the sense that in a power network generator rotor angle is directly proportional to $P_{load}$ and hence has the highest influence on load dynamics. 

\begin{figure}[htp!]
\centering
\subfigure[]{\includegraphics[scale=.32]{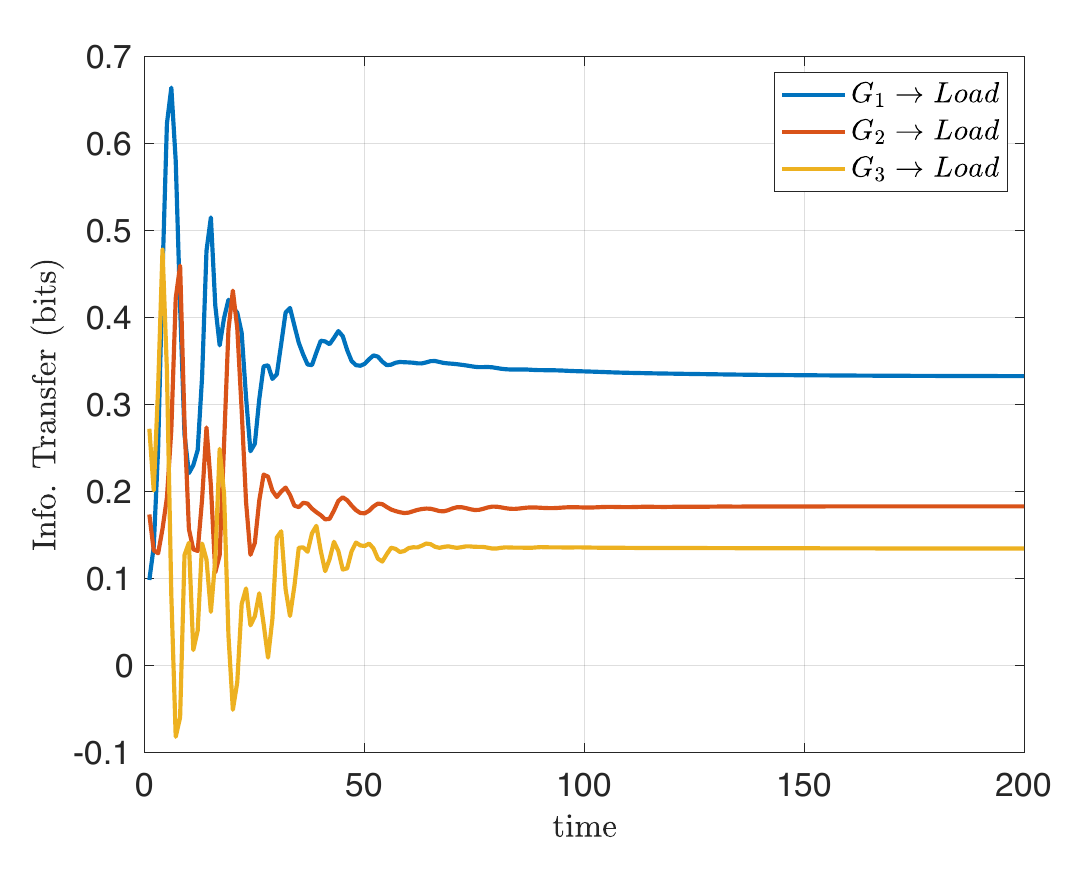}}
\subfigure[]{\includegraphics[scale=.32]{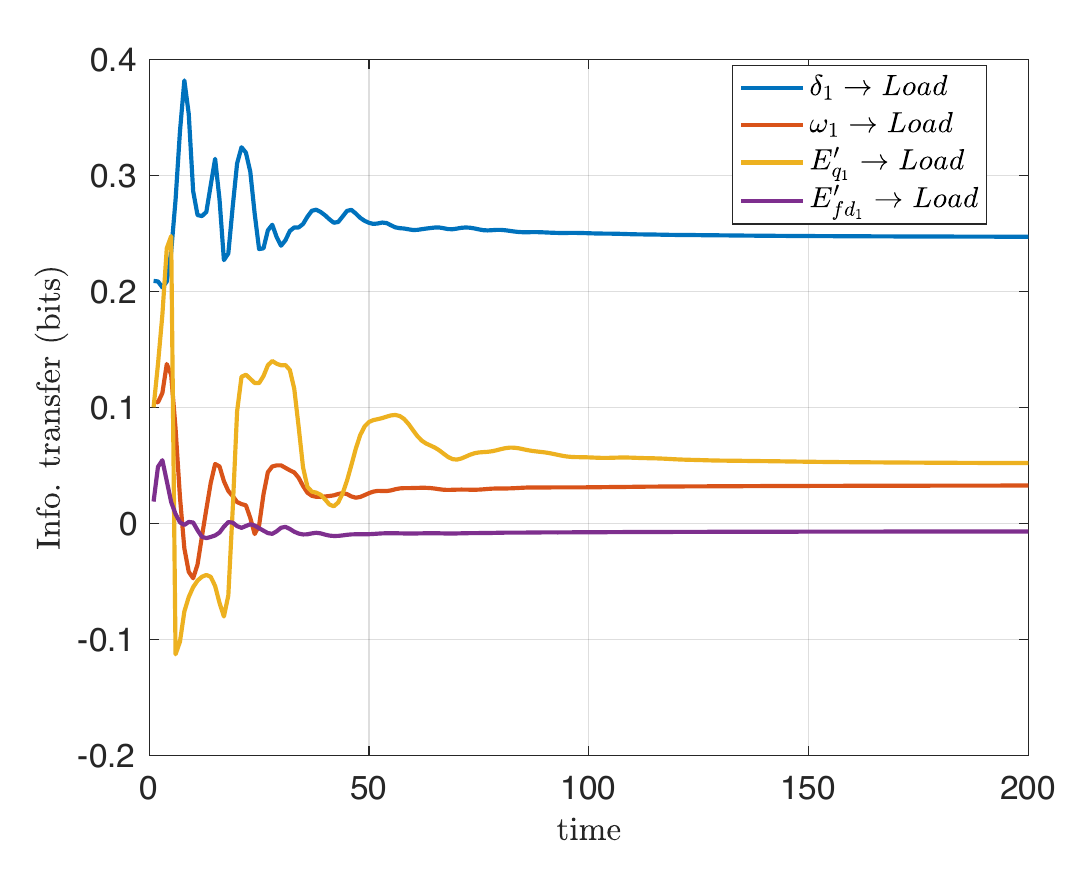}}
\caption{(a) Information transfer from each generator to load at load level 364.1 MW. (b) Information transfer from individual states of $G_1$ to load at load level 364.1 MW.}\label{fig_IT_to_load_op19}
\end{figure}
Similarly, the influence is quantified at an operating point  close to voltage collapse point for $P_{load} = 364.1 MW$. As shown in Fig. \ref{fig_IT_to_load_op19} (a), $G_1$ has maximum influence on load dynamics followed by $G_2$ and $G_3$ respectively. In Fig. \ref{fig_IT_to_load_op19} (b), we zoom in into $G_1$ to identify the most influential generator state for load dynamics, where again the generator angle state $\delta_1$ has the highest influence on the load dynamics and these results are concurrent with the expected behavior of a power system.

\subsection{Stability Characterization}

\begin{figure}[htp!]
\centering
\subfigure[]{\includegraphics[scale = 0.5,trim= 105 268 124 280, clip]{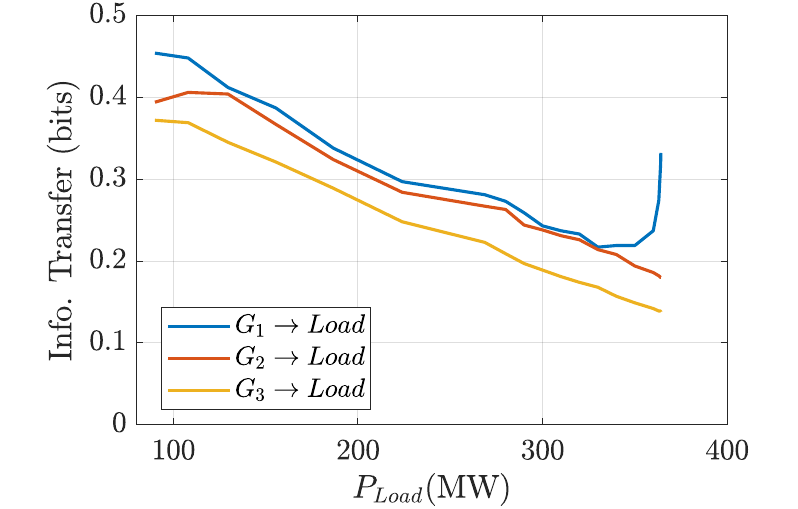}}
\subfigure[]{\includegraphics[scale = 0.5,trim= 105 258 124 270, clip]{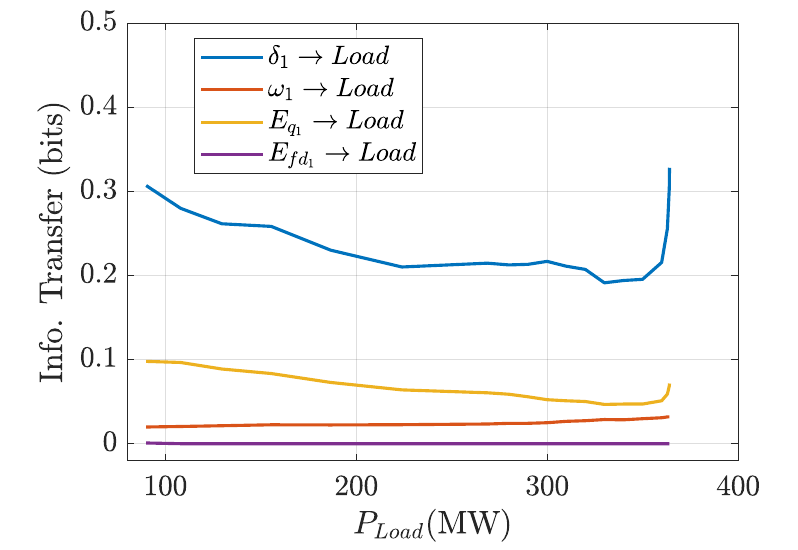}}
\caption{Steady state information transfer over operating points (a) from each generator to load (b) from states of $G_1$ to the load. }\label{IT_9bus_over_pv}
\end{figure}
The measure of information transfer can also be used to identify any approaching instabilities in the system. In order to identify the approaching instabilities, we compare the steady-state information transfer from the generators to the load over the operating points.  As shown in Fig. \ref{IT_9bus_over_pv}(a), the information transfer is computed from each generator to the load as the system loading ($P_{load}$) increases for the operating points shown in Fig. \ref{9bus_pv}. We observe that as $P_{load}$ is increased the information transfer from $G_1$ starts to increase rapidly. This indicates that the system is approaching instability (from theorem \ref{IT_stability_theorem}). Furthermore, only $G_1$ shows a sharp increase in information transfer close to the instability ($P_{load} = 364.1 MW$). Hence, it can be inferred that $G_1$ has the highest influence on approaching system stress close to the collapse points. 

Furthermore, we can also compute information transfer from the individual states of the component of interest (in this case $G_1$) to system load. As shown in Fig. \ref{IT_9bus_over_pv}(b), the state corresponding to generator angle ($\delta_1$) has the highest influence near the system collapse point. Hence, it can be inferred that the system approaches a rotor angle instability and it is the angle variable of $G_1$ that is most responsible for the instability.

It is important to note that, a data-driven information transfer thus computed, eliminates the need for eigenvalue and participation factor-based analysis for a given system. Furthermore, it provides a direct correlation between stability and causality (influence) in power systems. Also, this zoom-in approach reduces the computational burden for large power networks with thousands of states.

\section{Conclusions}\label{section_conclusion}
In this paper, we address the problem of influence and stability characterization in a power network using only time-series data of the dynamics states. In particular, we leverage the Koopman operator framework to learn the underlying dynamics and use it to compute the information transfer measure to characterize the influence of any state (subspace) on any other state (subspace). Furthermore, we use the computed information transfer metric to classify the kind of instability that a power network experiences as the network is stressed and approaches collapse. We demonstrate our proposed method first on the well-studied 3-bus system and this acts as a proof of concept. Furthermore, we analyze the causal structure of the IEEE 9-bus system, where we identify the influential states and the generators which influence the load and also classify the stability of the 9-bus system as it approaches system collapse.

\bibliographystyle{IEEEtran}
%\bibliography{ref,ref1,subhrajit_ref,subhrajit_power2}
\bibliography{ref_IT_power}

\end{document}